\newcommand{\q}{\mathbb{Q}}
\newcommand{\rr}{\mathbb{R}}
\newcommand{\n}{\mathbb{N}}
\newcommand{\h}{\mathbb{H}}
\newcommand{\bi}{\mathbf{i}}
\newcommand{\bj}{\mathbf{j}}
\newcommand{\ket}[1]{\vert#1\rangle}
\newcommand{\B}{\mathbb B}
\theoremstyle{plain}
\newtheorem{thm}{Theorem}[section]
\theoremstyle{definition}
\newtheorem{defn}{Definition}[section]
\theoremstyle{remark}
\numberwithin{equation}{section}
\begin{document}
\title{Algorithmic complexity of quantum capacity}

\author{Samad Khabbazi Oskouei \\
\textit{\small Department of Mathematics, Islamic Azad University}\\
\textit{\small Varamin-Pishva Branch, 33817-7489, Iran}\\
\and Stefano Mancini \\
\textit{\small School of Science and Technology,
University of Camerino}\\
\textit{\small Via M. delle Carceri 9, I-62032 Camerino, Italy}\\
\textit{\small \& INFN--Sezione Perugia}\\
\textit{\small Via A. Pascoli, I-06123 Perugia, Italy}}

\maketitle
\begin{abstract}
Recently the theory of communication developed by Shannon has been extended to the quantum realm by exploiting the rules of quantum theory. This latter stems on complex vector spaces. However complex (as well as real) numbers are just idealizations and they are not available in practice where we can only deal with rational numbers. This fact naturally leads to the question of whether the developed notions of capacities for quantum channels truly catch their ability to transmit information. Here we  address this issue for the quantum capacity. To this end we resort to the notion of semi-computability in order to approximately (by rational numbers) describe quantum states and quantum channel maps. We introduce algorithmic entropies (like algorithmic quantum coherent information) and derive relevant properties for them.

Then, we show that quantum capacity based on semi-computable concept equals
the entropy rate of algorithmic coherent information which in turn equals the standard quantum capacity. We also prove that such a quantity is recursively approximable. Although the algorithmic coherent information results not computable, at the end we present a method to compute it on a restricted subset of density matrices.

\end{abstract}

\textit{Keywords:} Algorithmic complexity, quantum entropies, quantum channels, quantum capacity

\section{Introduction}

Quantum channels are maps on the set of quantum states (density operators)
that are linear, completely positive and trace preserving \cite{wilde}.
They generalize the notion of classical channels, that is stochastic maps acting on probability distributions.
As such they permit to transfer quantum information (besides classical one),
namely quantum correlations also known as entanglement.
Their quantum information transmission ability would be captured by a capacity notion (quantum capacity),
similarly to what happen in the classical information theoretical framework
with the Shannon capacity of classical channels.
A remarkable difference is that the quantum capacity requires a regularization formula \cite{wilde}.
That is the computation of an entropic rate over infinite many quantum channel uses.
This is due to the possible effects arising by employing entangled input across different channel uses. Due to that, quantum capacity evaluation results a daunting task.

One way to deal with this problem can be to write down a computer program in classical computer or even in a quantum one and try to evaluate the quantity expressed by capacity formula. In this way, operators produced by algorithms are acceptable by a computer. However, the set of density matrices on a Hilbert spaces of any dimension is uncountable, while the set that can be acceptable by a Turing machine (even a quantum one) is countable.
This motivate us to study the quantum capacity when restricting to the usage of operators (and super-operators) that can be accepted by a computer.

To this end we invoke the notion of semi-computability introduced in \cite{Zvonkin70thecomplexity}  for a given function $f:\mathbb{N}\to \mathbb{R}$ and then extended to density matrices and operators in separable Hilbert spaces \cite{Gacs, FSA, SKHO}.
The main features of semi-computability is the universality concept, that is the existence of a
semi-measure
$\mu$ which dominates all other semi-computable semi-measures up to a positive multiplicative constant \cite{Zvonkin70thecomplexity}.
A universal semi-measure is related to the concept of algorithmic complexity in that
$-\log\mu(i_1\ldots i_n)\leq K(i_1\ldots i_n)+c$ and $K(i_1\ldots i_n)\leq -\log\mu(i_1\ldots i_n)+ c'$ where $i_1\ldots i_n$ is a sequence of symbols from a finite alphabet,
$c, c' > 0$ are constant independent of $i_1\ldots i_n$, and $K$ is the Kolmogorov complexity.
The latter notion was developed by \cite{Solomonoff}, Kolmogorov \cite{Kolmogorov} and Chaitin \cite{Chaitin}.
In a nutshell, the complexity of a target object is measured by the difficulty to describe it; in the case of targets describable by binary strings, they are algorithmically complex when their shortest binary descriptions are essentially of the same length in terms of necessary bits, the descriptions being binary programs such that any universal Turing machine that runs them outputs the target string.
Taking this approach, the quantum capacity should be characterized in terms of algorithmic (Kolmogorov) complexity. There are several ways this complexity can be extended to the quantum realm \cite{Berthiaume, Vitanyi, Mora, Gacs}.  Here we follow the Gacs approach \cite{Gacs}. This is  based on the notion of universal semi-density matrix. That is, on the existence of a density matrix $\hat{\mu}$ on separable Hilbert space that dominates any other semi-computable density matrix up to a multiple positive constant number.  The  algorithmic complexity for a semi-computable density matrix $\rho$ is given by $-{\rm Tr}(\rho \log\hat{\mu})$.

Along this way, we algorithmically define mutual and coherent information
in quantum systems.
 The nice feature is that these quantities are linear with respect to their arguments, a property that does not hold true for the standard entropies. Nonetheless we will prove that their rates equal those of the standard entropies.
 Then we show that quantum capacity based on semi-computable concept, namely
the entropy rate of algorithmic coherent information, equals the standard quantum capacity. We also prove that such a quantity is recursively approximable, which is one step ahead in investigating the computability or non-computability of quantum channel capacity \cite{wolf,cubitt}. Although the algorithmic coherent information results not computable, at the end we present a method to compute it on a restricted subset of density matrices which might be useful for computer programmers.

The organization of the paper foresees an initial Section \ref{sec:pre} where we recall some basic notions and set the notation. Then Section \ref{sec:aqi} revisits relevant entropic quantities from an algorithmic point of view and contains the derivation of some of their relevant properties,
 including the fact that the algorithmic quantum capacity coincides with the standard one.
In Section \ref{sec:aqc} one can find the proof that this is a recursively approximable quantity. Furthermore an algorithmic method to compute it on a restricted subset of density matrices is presented there.
Section \ref{sec:conclu} is for conclusions and outlook.


\section{Preliminaries}\label{sec:pre}

A quantum channel is a completely positive and trace preserving linear map $\Phi:\B(\h_A)\to \B(\h_B)$, where $\B(\h)$ is the algebra of bounded linear operators defined on the Hilbert space $\h$.
Actually input states to a quantum channel are bounded operators of unit trace (density operators)
constituting a proper subset $\mathbb{S}(\h)\subset\B(\h)$.

Given a quantum channel $\Phi$ and a density operator $\rho$, there are three important entropic quantities related to the pair $(\rho,\Phi)$, namely the entropy of the input state\footnote{Throughout the paper the $\log$ is intended on base 2.}
\begin{equation*}
S(\rho):=-{\rm Tr}\left(\rho\log\rho\right),
\end{equation*}
the output entropy
\begin{equation*}
S(\Phi(\rho)):=-{\rm Tr}\left(\Phi(\rho)\log\Phi(\rho)\right),
\end{equation*}
and the exchange entropy
\begin{equation*}
S(\rho,\Phi):=S\left({\rm id}\otimes \Phi)(|\psi_\rho\rangle_{RA}\langle\psi_\rho|\right),
\end{equation*}
where  $|\psi_\rho\rangle_{RA}\in\h_R\otimes\h_A$ is the purification of $\rho$.
Let the isometry $V:\B(\h_A)\to\B(\h_B)\otimes\B(\h_E)$ be the Stinespring dilation of the channel $\Phi$ \cite{Stine},
where $E$ labels the environment, i.e. $\Phi(\rho)={\rm Tr}_E(V \rho V^\dag)$.
Defining by
\begin{equation}
\tilde{\Phi}(\rho):={\rm Tr}_B(V \rho V^\dag)
\end{equation}
the complementary channel, it results \cite{wilde}:
$$
S(\rho, \Phi)=S\left(\tilde{\Phi}(\rho)\right).
$$
The input, output and exchange entropy are building blocks for defining other entropic quantities like
quantum coherent information
\begin{equation}
I_c(\rho,\Phi):=S(\Phi(\rho))-S(\rho,\Phi).
\end{equation}
The corresponding entropy rate is
\begin{eqnarray}
Q_c(\Phi)&:=&\lim_{n\to\infty}\frac{1}{n} \max_{\rho_n} I_c(\rho_n, \Phi^{\otimes n}),
\label{eq:Qcapacity2}
\end{eqnarray}
where $\rho_n\in\mathbb{S}(\h_A^{\otimes n})$.  In Refs.\cite{devetak,hayden} it is shown that the quantum capacity for a quantum channel $\Phi$ is given by $Q_c(\Phi)$.

Furthermore, for any two density operators
$\rho,\sigma\in\mathbb{S}(\h)$, their relative entropy is defined by
\begin{equation}\label{eq:relent}
S(\rho,\sigma):={\rm Tr}\left(\rho (\log\rho-\log\sigma)\right).
\end{equation}

It is known \cite{lindblad} that
\begin{equation}\label{relative}
S\left(\Phi(\rho),\Phi(\sigma) \right) \leq S (\rho,\sigma).
\end{equation}

 The aim of this paper is to revisit the characterization \eqref{eq:Qcapacity2}
 of a quantum channel $\Phi$ by an algorithmic approach.

Let us first recall few basic notion about computability and algorithmic complexity.

\bigskip

A function from $\n^k$ to $\n$ is called partially computable if it
is computed by a Turing Machine \cite{Da} (here
Turing Machine, program, and algorithm are used interchangeably). A partial computable function is called computable if it is halted  by the associated Turing Machine for every input natural number.

Let $\theta:\mathbb{N}\times \mathbb{N}\to \mathbb{N}$ be a paring function defined by $\theta(x, y)=2^x(2 y +1)-1$, for each $x, y\in \mathbb{N}$. Let $\iota$ be a bijection from $\mathbb{N}\times\mathbb{N}\to \mathbb{Q}$ ~\cite{FSA}. A function $f:\mathbb{N}\to \mathbb{Q}$ is called computable if there exists a computable function $g:\mathbb{N}\to\mathbb{N}$ such that $f(n)=\iota\circ\theta^{-1}(g(n))$.

A function
$g:\n\to\rr$ is called \textit{lower semi-computable} if there
exists a computable function $f:\n\times\n\to\q$,    such that the
sequence $f_n(x)=f(n,x)$, for any $x, n\in\mathbb{N}$, is increasing and
$\lim_{n\rightarrow\infty}f_n=g$.
A function $h:\n\to\rr$ is called \textit{upper semi-computable} if
$-h$ is lower semi-computable and it will be called
\textit{computable} if it is lower and upper semi-computable.

A real number $x\in\rr$ is called computable if there exists a computable function $f:\n\to \n\times \n$ such that $|x-f(n)|<2^{-n}$. For example $e$ and $\pi$ are computable real numbers.
A real number $x\in\rr$ is called \emph{recursively approximable} if there exists a computable function $f:\n\to \mathbb{Q}$ such that $x=\lim_{n\to\infty} f(n)$.
This means that we can approximate $x$ by $\q$ since  $\q$ is isomorphic with $\n\times\n$.

A function $\mu:\n\to\rr$ is called a \textit{semi-computable,
semi-measure} if it is a positive semi-computable function such that
$\Sigma_x\mu(x)\leq 1$. It has been shown in  \cite{FSA} that there exists
 a universal semi-computable  semi-measure $\mu$ in the following sense:

\textit{For any semi-computable semi-measure $\nu$ there exists a constant number
$c_\nu>0$ such that for each $x\in\n$, $c_\nu\,\nu(x)\leq\mu(x)$.}


\bigskip


Turning to the quantum world, we will denote by $\h$ the infinite dimensional separable Hilbert space obtained by the closure of the union of the nested infinite sequence
$\h^{\otimes n}\subset\h^{\otimes (n+1)}$ with respect to the norm coinciding with the
usual Hilbert norm on each $\h^{\otimes n}$.\footnote{Following \cite{FSA}
the embedding is obtained by turning the last bit of each canonical basis element to $0$.}
The corresponding orthogonal projections from $\h$ onto $\h^{\otimes n}$
will be denoted by  $P_n$, and  the canonical injected subspace $\h^{\otimes n}$ into $\h$ will be identified  by  $\h^{\otimes n}$.

The concept of semi-computable semi-density matrices on infinite
dimensional separable Hilbert spaces are introduced in \cite{FSA}. In such a context
we consider a fixed orthonormal basis $\{\ket{\bi}\}_{\bi\in\Omega^*_2}$,  where  $\Omega^*_2$ is the set of all finite length binary strings. Indeed, for any $\bi, \bj\in\Omega^*_2$, $\langle\bi | \bj\rangle=1$, if $\bi=\bj$ and $0$ otherwise.

A vector $\ket{\psi}=\sum_{\bi\in\Omega^*_2}a_\bi \ket{\bi} \in \mathbb{H}$ is
called \emph{elementary} if only a finite number of its coefficient
$a_\bi$ is not zero and those are algebraic numbers. It can be shown that each elementary
quantum state can be identified by a  natural number \cite{SKHO}.
Furthermore, $\ket{\psi}=\sum_{\bi\in\Omega^*_2}a_\bi \ket{\bi}\in\h$
where $a_i\in\rr$,
will be termed semi-computable if there exist a computable sequence
of elementary vectors $\ket{ \psi_n} = \sum_{i\in \n} a_{n,i} \ket{ \bi}$ and a
computable function $k: \n \rightarrow \q$, such that
$\lim_{n\rightarrow\infty} k_n = 0$, and for each $n$ it is
$0\leq a_\bi -a_{n,i}\leq k_n$.

A linear operator
$T:\h_{[0, n-1]}\rightarrow \h_{[0, n-1]}$, $\h_{[0, n-1]}:=\mathbb{H}^{\otimes n}$, will be called elementary if
the real and imaginary parts of all of its matrix entries are
algebraic numbers.
The linear operator $T:\h\rightarrow \h$, is a semi-density matrix
if  $T$ is positive and $0\leq {\rm Tr}(T)\leq 1$.

Let  $n_1,n_2\in\n$ and $n_1\leq n_2$. Let
$T_j:\h_{[0 ,n_j-1]}\rightarrow \h_{[0,n_j-1]}$, $j=1,2$, be two
linear operators: $T_2$ will be said to be quasi-greater than $T_1$,
written as $T_1\leq_q T_2$, if $P_{n_1}\,T_2\,P_{n_1}-T_1\geq0$, where
$P_{n_1}$ is the canonical projector from $\h_{n_2}$ to $\h_{n_1}$.
A sequence of linear operators $T_n:\h_{[0,n-1]}\rightarrow
\h_{[0,n-1]}$ will be called quasi-increasing if for all $n\geq 1$,
$T_{n+1}\geq_q T_n$.

A linear operator $T$ on $\h$ is a
semi-computable semi-density matrix, if there exists a computable
quasi-increasing sequence of elementary semi-density matrices
$T_n\in \B(\h_{[0,n-1]})\subseteq \B(\h)$ such that
$\lim_{n\rightarrow\infty}\|T-T_n\|_{1}=0$.

It has been shown in  \cite{FSA} that there exists  a universal
semi-computable semi-density matrix $\hat{\mu}$ in the following
sense:

\textit{For any semi-computable semi-density matrix $\rho$ there exists
a constant number $c_\rho>0$ such that $c_\rho\,\rho\leq\hat{\mu}$.}

\bigskip

\begin{defn}\label{def:Gacs}
The upper Gacs complexity for any semi-computable semi-density matrix
$\rho\in\mathbb{S}(\h)$ is defined as follows:
\begin{equation}
 G(\rho):=-{\rm Tr}(\rho \log\hat{\mu}).
\end{equation}
\end{defn}

 Let $AB$ be a composite quantum system with two subsystems $A$ and $B$. The associated Hilbert space $\mathbb{H}_{AB}$ can be extended to infinite dimensional separable Hilbert space $\mathbb{H}$ containing all  Hilbert subspaces $\mathbb{H}^{\otimes n}_{AB}$, $n\in\n$.

Then, $\hat{\mu}_{A B}:=P_{\dim \mathbb{H_{A B}}} \hat{\mu} P_{\dim \mathbb{H_{A B}}}$
(with $P_{\dim \mathbb{H_{A B}}}:\mathbb{H}\to\mathbb{H}_{A B}$ the canonical projector
 onto finite dimensional Hilbert space $\mathbb{H}_{AB}$),
$\hat{\mu}_A:={\rm Tr}_B(\hat{\mu}_{A B}) $ and $\hat{\mu}_B:={\rm Tr}_A(\hat{\mu}_{A B}) $, , are universal semi-density matrices on $\mathbb{H}_{A B}$, $\mathbb{H}_A$ and $\mathbb{H}_B$, respectively. In addition, we can establish  universal semi-density matrices  on Hilbert spaces $\mathbb{H}_{A B}^{\otimes n}$, $\mathbb{H}_A^{\otimes n}$ and $\mathbb{H}_B^{\otimes n}$, respectively, as follows: $$\hat{\mu}_{A B}^n:=P_{\dim \mathbb{H}_{A B}^{\otimes n}} \hat{\mu} P_{\dim \mathbb{H}_{A B}^{\otimes n}}, \quad \hat{\mu}_B^n:={\rm Tr}_A(\hat{\mu}^n_{A B}),\quad
\hat{\mu}_A^n:= {\rm Tr}_B(\hat{\mu}^n_{A B}),$$


\section{Algorithmic Quantum Information}\label{sec:aqi}

\begin{defn}
A quantum channel $\Phi:\B(\h_A)\to\B(\h_B)$ is defined semi-computable if for any semi-computable semi-density matrix $\rho\in\mathbb{S}(\h_A)$, $\Phi(\rho)$ is a semi-computable semi-density matrix on $\mathbb{S}(\h_B)$.
\end{defn}

 For example the quantum channels  with algebraic entries  in the Choi matrix representation \cite{Choi} are semi-computable channels.
\begin{thm}\label{relativequ}
Let $\h_A,\h_B$ be finite dimensional Hilbert spaces and $\Phi:\B(\h_A)\to \B(\h_B)$ be a semi-computable quantum channel. Then,
\begin{equation*}
S(\Phi^{\otimes n}(\rho_n),\hat{\mu}^n_B) \leq S(\rho_n, \hat{\mu}^n_A)+\alpha(n),
\end{equation*}
where  $\hat{\mu}^n_A$ and $\hat{\mu}^n_B$, for each $n$, are universal semi-density matrices on $\mathbb{H}^{\otimes n}_A$ and $\mathbb{H}^{\otimes n}_B$ respectively, and $\lim_{n\to \infty}\frac{\alpha(n)}{n}=0$.
\end{thm}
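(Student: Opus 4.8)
The plan is to reduce the statement to the monotonicity of relative entropy \eqref{relative} together with the universality of $\hat\mu^n_B$. First I would apply \eqref{relative} to the channel $\Phi^{\otimes n}$, which is completely positive and trace preserving, evaluated on the pair $(\rho_n,\hat\mu^n_A)$, to obtain
\[
S\!\left(\Phi^{\otimes n}(\rho_n),\,\Phi^{\otimes n}(\hat\mu^n_A)\right)\leq S(\rho_n,\hat\mu^n_A).
\]
This isolates the whole difficulty in comparing the second argument $\Phi^{\otimes n}(\hat\mu^n_A)$ appearing here with the universal reference $\hat\mu^n_B$ appearing in the quantity we want to bound; the cost of swapping one for the other is what will constitute $\alpha(n)$.

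For that swap, the key remark is that $\Phi^{\otimes n}(\hat\mu^n_A)$ is a semi-computable semi-density matrix on $\mathbb{H}^{\otimes n}_B$, since $\hat\mu^n_A$ is semi-computable and $\Phi$ (hence $\Phi^{\otimes n}$) preserves semi-computability by definition of a semi-computable channel. Universality of $\hat\mu^n_B$ then yields a constant $c_n>0$ with $c_n\,\Phi^{\otimes n}(\hat\mu^n_A)\leq\hat\mu^n_B$. Using operator monotonicity of the logarithm on the common support gives $(\log c_n)\,\one+\log\Phi^{\otimes n}(\hat\mu^n_A)\leq\log\hat\mu^n_B$, and tracing this inequality against $\Phi^{\otimes n}(\rho_n)$, whose trace equals one by trace preservation, converts the difference of the two relative entropies into exactly $-\log c_n$. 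Setting $\alpha(n):=-\log c_n$ and chaining with the previous display produces the claimed inequality. A technical point I would treat carefully here is the support condition: the logarithms are only well behaved on the range of the operators, so I would restrict all trace manipulations to the common support, which is also exactly where the relative entropies are finite.

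The main obstacle is the asymptotic requirement $\alpha(n)/n\to0$, i.e.\ that the domination constants $c_n$ cannot decay faster than sub-exponentially in $n$. Here I would use that, just as in the classical bound $-\log\mu(x)\leq K(x)+c$, the constant certifying universality of $\hat\mu$ against a given semi-computable semi-density matrix is controlled up to an additive $O(1)$ by the length of a program that lower-approximates that matrix. The crucial structural fact is that the whole family $\{\Phi^{\otimes n}(\hat\mu^n_A)\}_n$ is produced by one fixed algorithm whose only $n$-dependent input is the integer $n$ itself: the descriptions of $\Phi$, of the universal $\hat\mu$, and of the projection, partial-trace and tensor-power routines are all independent of $n$. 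Consequently the program length is $\log n+O(\log\log n)$, forcing $-\log c_n=O(\log n)$ and hence $\alpha(n)/n=O(\log n/n)\to0$. Making this last estimate fully rigorous --- pinning down how the universality constant for the specific sequence $\Phi^{\otimes n}(\hat\mu^n_A)$ is majorised by its program length uniformly in $n$ --- is the step I expect to require the most care.
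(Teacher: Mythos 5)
Your reduction is structurally the same as the paper's: monotonicity of relative entropy \eqref{relative} applied to $\Phi^{\otimes n}$ on the pair $(\rho_n,\hat\mu^n_A)$, plus a universality-based swap of $\Phi^{\otimes n}(\hat\mu^n_A)$ for $\hat\mu^n_B$ at a cost of $-\log c_n$. The gap is exactly the step you flag at the end: your control $-\log c_n=O(\log n)$ rests on a quantitative, ``coding-theorem'' version of universality --- the domination constant being exponential in the length of a program generating the dominated matrix --- and this is not among the tools the paper provides. The universality property, as stated here and in the cited references, is purely existential: for each semi-computable semi-density matrix $\rho$ there is \emph{some} $c_\rho>0$ with $c_\rho\,\rho\le\hat\mu$, with no link between $c_\rho$ and any description length. (The displayed relation $-\log\mu(i_1\ldots i_n)\le K(i_1\ldots i_n)+c$ concerns complexities of strings, not domination constants of semi-measures.) Applying universality separately for each $n$ therefore yields constants $c_n$ about which nothing can be asserted, and your program-length heuristic, while morally sound, would first have to be established from the construction of $\hat\mu$ in the infinite-dimensional Gacs framework.

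The paper closes this gap with a one-line device that is precisely the rigorous substitute for your heuristic: instead of dominating each $\Phi^{\otimes n}(\hat\mu^n_A)$ separately, form the single semi-computable semi-density matrix
\[
\sigma:=\sum_{n=1}^{\infty}\delta(n)\,\Phi^{\otimes n}(\hat\mu^n_A),
\qquad
\delta(n)=\frac{1}{n\log^2 n},
\]
apply the (existential) universality of $\hat\mu$ \emph{once} to $\sigma$, obtaining a single constant $c_\Phi$ with $c_\Phi\,\sigma\le\hat\mu$, and read off the term-by-term bound $c_\Phi\,\delta(n)\,\Phi^{\otimes n}(\hat\mu^n_A)\le\hat\mu$; projecting onto $\mathbb{H}_{AB}^{\otimes n}$ and partial-tracing (both order-preserving operations) turns this into $c_\Phi\,\delta(n)\,\Phi^{\otimes n}(\hat\mu^n_A)\le\hat\mu^n_B$. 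This produces $c_n=c_\Phi\,\delta(n)$, i.e. $\alpha(n)=-\log c_\Phi+\log n+2\log\log n=O(\log n)$, so $\alpha(n)/n\to0$ as required. With this replacement, the rest of your argument --- operator monotonicity of $\log$ on the common support, tracing against $\Phi^{\otimes n}(\rho_n)$, then chaining with \eqref{relative} --- goes through verbatim.
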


\begin{proof}
Since $\Phi$ is semi-computable then $\Phi^{\otimes n}(\hat{\mu}^n_A)$, for any $n$, is also semi-computable semi-density matrix. Now, let us consider the semi-computable semi-density matrix $\sum_{n=1}^{\infty} \delta(n) \Phi^{\otimes n}(\hat{\mu}_A^n)$ where
\begin{equation}
\delta(n)=\frac{1}{n\log^2{n}}.
\label{deltan}
\end{equation}
There exists a constant number $c_\Phi$ such that
$$c_\Phi \delta(n) \Phi^{\otimes n}(\hat{\mu}_A^n)\leq c_\Phi \sum_1^\infty \delta(n) \Phi^{\otimes n}(\hat{\mu}^n_A)\leq  \hat{\mu}.$$
Then,
\begin{eqnarray*}
c_\Phi \delta(n) \Phi^{\otimes n}(\hat{\mu}_A^n)&=&c_\Phi \delta(n) {\rm Tr}_A(P_{\dim\mathbb{H}_{AB}^{\otimes n}}\Phi^{\otimes n}(\hat{\mu}_A^n)P_{\dim\mathbb{H}_{AB}^{\otimes n}})\\
&\leq & {\rm Tr}_A(P_{\dim\mathbb{H}_{AB}^{\otimes n}}\hat{\mu}P_{\dim\mathbb{H}_{AB}^{\otimes n}})=\hat{\mu}_B^n.
\end{eqnarray*}
Furthermore,
$$-{\rm Tr}\left(\Phi^{\otimes n}(\rho_n) \log\hat{\mu}^n_B \right) \leq - {\rm Tr} \left(\Phi^{\otimes n}(\rho_n)\log\Phi^{\otimes n}(\hat{\mu}^n_A)\right)-\log c_\Phi-\log \delta(n).
$$
By adding ${\rm Tr}\left(\Phi^{\otimes n}(\rho_n) \log\Phi^{\otimes n}(\rho_n) \right)$ to both sides of inequality we obtain
$$
S(\Phi^{\otimes n}(\rho_n),\, \hat{\mu}^n_B)\leq S\left(\Phi^{\otimes n}(\rho_n),\, \Phi^{\otimes n}(\hat{\mu}^n_A)\right)-\log c_\Phi-\log \delta(n).
$$
By the relation \ref{relative} the proof is complete.
\end{proof}

\begin{defn}
The algorithmic coherent information for a given quantum channel
$\Phi:\B(\h_A)\to \B(\h_B)$ is defined as follows:
\begin{equation}
IG_c(\rho,\Phi):=G(\Phi(\rho))-G(\rho, \Phi),
\end{equation}
where following Definition \ref{def:Gacs}
$$
G(\rho):=-{\rm Tr}(\rho \log\hat{\mu}_A), \quad
G(\rho, \Phi):=-{\rm Tr}\left((id\otimes \Phi)|\psi\rangle_{RA}\langle\psi|_{RA} \log\hat{\mu}_{RB}\right),
$$
being $\ket{\psi}_{RA}$ a purification of $\rho\in\mathbb{S}(\h_A)$.
\end{defn}

\begin{defn}
For a given quantum channel $\Phi:\B(\h_A)\to \B(\h_B)$ the algorithmic coherent information entropy rate is defined as follows:
\begin{equation}
QG_c(\Phi):=\lim_{n\to\infty}\frac{1}{n} \max_{\rho_n} IG_c(\rho_n, \Phi^{\otimes n}),
\end{equation}
where $\rho_n\in \mathbb{S}(\mathbb{H}^{\otimes n}_A)$ are semi-computable density matrices.
\end{defn}

\begin{thm}\label{quantumalgorithmiccapacity}
For a semi-computable quantum channel  $\Phi:\B(\h_A)\to \B(\h_B)$, we have
$$
Q_c(\Phi)=QG_c(\Phi),
$$
where the maximum in the information entropy rate is taken over all semi-computable density matrices.
\end{thm}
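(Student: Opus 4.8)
The plan is to establish the two inequalities $QG_c(\Phi)\le Q_c(\Phi)$ and $QG_c(\Phi)\ge Q_c(\Phi)$ separately. The common tool is the identity $G(\omega)=S(\omega)+S(\omega,\hat\mu)$, valid for any density matrix $\omega$ and the pertinent universal semi-density matrix, together with the fact that $S(\omega,\hat\mu)\ge0$ by Klein's inequality (this holds even though $\hat\mu$ is sub-normalised, since $S(\omega,\hat\mu)\ge 1-{\rm Tr}\,\hat\mu\ge0$). Writing $\sigma^{(n)}_{RB}:=(\mathrm{id}\otimes\Phi^{\otimes n})(|\psi\rangle_{RA}\langle\psi|)$ for a purification $|\psi\rangle_{RA}$ of $\rho_n$, and using ${\rm Tr}_R\,\sigma^{(n)}_{RB}=\Phi^{\otimes n}(\rho_n)$ together with $S(\rho_n,\Phi^{\otimes n})=S(\sigma^{(n)}_{RB})$, the identity rewrites the algorithmic coherent information as
\[
IG_c(\rho_n,\Phi^{\otimes n})=I_c(\rho_n,\Phi^{\otimes n})+S(\Phi^{\otimes n}(\rho_n),\hat\mu^n_B)-S(\sigma^{(n)}_{RB},\hat\mu^n_{RB}).
\]

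For the first inequality I would apply the monotonicity \eqref{relative} of the relative entropy under the partial trace ${\rm Tr}_R$. Since ${\rm Tr}_R\,\hat\mu^n_{RB}$ is again a universal semi-density matrix on $\h^{\otimes n}_B$, it agrees with $\hat\mu^n_B$ up to a multiplicative constant uniform in $n$, so monotonicity gives $S(\Phi^{\otimes n}(\rho_n),\hat\mu^n_B)\le S(\sigma^{(n)}_{RB},\hat\mu^n_{RB})+O(1)$. Hence $IG_c(\rho_n,\Phi^{\otimes n})\le I_c(\rho_n,\Phi^{\otimes n})+O(1)$; maximising over semi-computable $\rho_n$ (a subset of all density matrices) and passing to the rate limit yields $QG_c(\Phi)\le Q_c(\Phi)$.

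For the reverse inequality I would first discard the nonnegative output term to obtain the lower bound $IG_c(\rho_n,\Phi^{\otimes n})\ge I_c(\rho_n,\Phi^{\otimes n})-S(\sigma^{(n)}_{RB},\hat\mu^n_{RB})$, and then control the error by applying Theorem \ref{relativequ} to the semi-computable channel $\mathrm{id}\otimes\Phi$ with the pure input $|\psi\rangle_{RA}\langle\psi|$. As this input is pure its entropy vanishes, so $S(|\psi\rangle_{RA}\langle\psi|,\hat\mu^n_{RA})=G(|\psi\rangle_{RA}\langle\psi|)$ and therefore $S(\sigma^{(n)}_{RB},\hat\mu^n_{RB})\le G(|\psi\rangle_{RA}\langle\psi|)+\alpha(n)$ with $\alpha(n)/n\to0$. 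To make this sublinear I would not optimise over arbitrary $\rho_n$ but restrict, without loss of rate, to block inputs: for fixed $m$ pick $\rho_m$ achieving $I_c(\rho_m,\Phi^{\otimes m})$ within $\epsilon$ of the $m$-letter optimum, approximate it in trace norm by a computable (algebraic-entry) state $\tilde\rho_m$ using continuity of the coherent information in finite dimension, and feed $\tilde\rho_m^{\otimes k}$ into $\Phi^{\otimes mk}$. Additivity of the coherent information on product inputs gives $I_c(\tilde\rho_m^{\otimes k},\Phi^{\otimes mk})=k\,I_c(\tilde\rho_m,\Phi^{\otimes m})$, while a purification of $\tilde\rho_m^{\otimes k}$ is the $k$-th tensor power of a single fixed computable pure state, whose uniform describability forces its Gacs complexity to be $O(\log k)=o(k)$ for fixed $m$. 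Dividing by $mk$ and letting $k\to\infty$ the error terms drop out and $\tfrac1m I_c(\tilde\rho_m,\Phi^{\otimes m})$ survives; letting $\epsilon\to0$ and $m\to\infty$ (padding residual blocks of length $<m$, an $o(n)$ effect) gives $QG_c(\Phi)\ge Q_c(\Phi)$, and together with the first part $QG_c(\Phi)=Q_c(\Phi)$.

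I expect the main obstacle to be precisely this last step: the term $S(\sigma^{(n)}_{RB},\hat\mu^n_{RB})$ cannot be bounded uniformly, since for generic inputs it is genuinely of order $n$ and only cancels against the output term; the argument must therefore exhibit a concrete near-optimal family whose purifications have sublinear Gacs complexity, which is exactly what the blocking-plus-continuity construction achieves. The two ancillary facts to verify carefully are that marginals of the universal semi-density matrices are universal with domination constants independent of $n$ (so the $O(1)$ slacks above do not accumulate into a linear term), and that $\mathrm{id}\otimes\Phi$ is semi-computable so that Theorem \ref{relativequ} genuinely applies.
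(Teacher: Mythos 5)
Your strategy is sound and, suitably fleshed out, proves the theorem; but it is worth recording that it does \emph{not} follow the paper's own route, and that the divergence is substantive rather than cosmetic. For the direction $QG_c(\Phi)\le Q_c(\Phi)$ the difference is mild: the paper does not use monotonicity under ${\rm Tr}_R$ as you do, but instead applies universality of $\hat\mu$ to the semi-computable mixture $\sum_n\delta(n)\,\Phi^{\otimes n}(\rho_n)$ with $\delta(n)=1/(n\log^2 n)$, obtaining $G(\Phi^{\otimes n}(\rho_n))\le S(\Phi^{\otimes n}(\rho_n))-\log c-\log\delta(n)$, and then uses $G\ge S$ on the exchange term, which gives $IG_c\le I_c+O(\log n)$; your route via the identity $G(\omega)=S(\omega)+S(\omega,\hat\mu)$ and Lindblad's inequality \eqref{relative} applied to ${\rm Tr}_R$ reaches the same conclusion with an $O(1)$ slack. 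Both of your flagged ancillary facts do hold: in the paper's setup the reduced matrices $\hat\mu^n_B$ are \emph{defined} as partial traces of the projected universal matrix, so the domination constant is uniform (indeed trivial), and semi-computability of $\mathrm{id}\otimes\Phi$ is immediate for the channels the paper considers.

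The genuine divergence is in $Q_c(\Phi)\le QG_c(\Phi)$, and here your caution is not over-engineering but a repair of a real gap. The paper claims that Theorem \ref{relativequ} directly yields the \emph{uniform} bound $I_c(\rho_n,\Phi^{\otimes n})\le IG_c(\rho_n,\Phi^{\otimes n})+\alpha(n)$ for every input, and concludes by taking limits. Rewritten via $S(\omega,\hat\mu)=G(\omega)-S(\omega)$, that step amounts to $S\bigl((\mathrm{id}\otimes\Phi^{\otimes n})(\psi_{RA}),\hat\mu^n_{RB}\bigr)\le S\bigl(\Phi^{\otimes n}(\rho_n),\hat\mu^n_B\bigr)+\alpha(n)$, i.e.\ a bound on the joint $RB$ relative entropy by that of its $B$ marginal --- the \emph{reverse} of monotonicity --- which does not follow from Theorem \ref{relativequ}, since that theorem only bounds outputs of a semi-computable channel by its inputs. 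Moreover no sublinear uniform $\alpha(n)$ can exist: for the identity channel the $2^{2n}$ generalized Bell states are mutually orthogonal and elementary, and since at most $2^k$ mutually orthogonal pure states can have Gacs complexity below $k$, most of them satisfy $G\ge(2-\epsilon)n$, while the corresponding $B$ marginal (maximally mixed) has $G-S=O(\log n)$; so for such purifications the exchange term is of order $n$ exactly as you predicted (this also exposes that $IG_c$ as defined depends on the choice of purification). Your blocking construction --- a computable near-maximizer $\tilde\rho_m$ obtained by continuity, tensor powers via additivity of $I_c$, purifications $\psi^{\otimes k}$ of Gacs complexity $O(\log k)$ by universality against $\sum_k\delta(k)\,\psi^{\otimes k}$, and padding of residual blocks --- supplies precisely what the paper's argument lacks: an explicit family of semi-computable inputs approaching $Q_c$ on which all error terms are sublinear. (An alternative repair, closer in spirit to the paper, is to note that computable near-maximizers of $I_c(\cdot,\Phi^{\otimes n})$ can be found by exhaustive search over algebraic nets, so that the whole sequence has complexity $O(\log n)$; but some such restriction to low-complexity inputs is indispensable.)
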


\begin{proof}
First we show that
$$
QG_c(\Phi)\leq Q_c(\Phi).
$$
Let $\rho_n\in \mathbb{S}(\mathbb{H}_A^{\otimes n})$ be a computable sequence of
semi-computable semi-density matrices.
 Defining $\rho:=\sum_{n=1}^\infty \delta(n)\tilde{\Phi}^{\otimes n}(\rho_n)$, where
$\delta(n)$ is like in Eq.\ref{deltan}, it is clear that $\rho$ is  semi-computable
semi-density matrix and hence there exists a constant number $c_B>0$,
such that
$$
c_B \rho\leq \hat{\mu}\Rightarrow c_B \delta(n)\rho_n \leq \hat{\mu} \Rightarrow -\log\hat{\mu} \leq -\log c_B
-\log\delta(n)-\log\tilde{\Phi}^{\otimes n}(\rho_n).
$$
Now, let $P_{A B}:\mathbb{H}\to \mathbb{H}_{A B}^{\otimes n}$ be the canonical projector. Then, we have
\begin{eqnarray*}
-\log\hat{\mu}_B^n&=& -{\rm Tr}_A (P_{A B}\log\hat{\mu}P_{A B})\\
&\leq& - \log c_B -\log\delta(n)- {\rm Tr}_B( P_{AB}\log\tilde{\Phi}^{\otimes n}(\rho_n) P_{AB})\\
&=&-\log c_A -\log\delta(n)- \log\tilde{\Phi}^{\otimes n}(\rho_n).
\end{eqnarray*}
Therefore
\begin{eqnarray*}
IG_c(\rho_n,\Phi^{\otimes n})&=&G(\Phi^{\otimes n}(\rho_n))-G(\tilde{\Phi}^{\otimes n}(\rho_n)) \\
&\leq& ‎S(\Phi^{\otimes n}_E(\rho_n)) ‎- ‎\log\delta(n)\log c_B
-‎G(\tilde{\Phi}^{\otimes n}(\rho_n)) \\
&\leq& ‎S(\Phi^{\otimes n}_E(\rho_n)) ‎- ‎\log\delta(n)\log c_B
-‎S(\tilde{\Phi}^{\otimes n}(\rho_n)) \, .
\end{eqnarray*}
 The first inequality comes from the definition of Gacs complexity \ref{def:Gacs} and the fact that $\Phi^{\otimes n}(\rho_n)$ is semi-computable semi-density matrix.
For the second inequality, we used the fact that $S(\rho)\leq G(\rho)$.
By taking the limit on both sides of  the inequality we get
$$
\lim_{n\to \infty} \frac{1}{n} IG_c(\rho_n, \Phi^{\otimes n})\leq \lim_{n\to \infty} \frac{1}{n}
I_c(\rho_n, \Phi^{\otimes n})\leq  \lim_{n\to \infty} \frac{1}{n}
\max_\rho I_c(\rho, \Phi^{\otimes n}) \leq Q_c(\Phi).
$$
Therefore,
$$
QG_c(\Phi) \leq Q_c(\Phi).
$$

\bigskip

To prove the inverse relation, namely
$$
QG_c(\Phi)\geq Q_c(\Phi),
$$
we may notice from Theorem \ref{relativequ} that
$$
S(\Phi^{\otimes n}(\rho_n),\hat{\mu}_B)\leq S(\rho_n,\hat{\mu}_A)+\alpha(n),
$$
which using relative entropy (Eq.\eqref{eq:relent}) and Gacs complexity (Definition \ref{def:Gacs}) yields
$$
S(\Phi^{\otimes n}(\rho_n))-G(\Phi^{\otimes n}(\rho_n))
\leq S(\rho_n,\Phi^{\otimes n})-G(\rho_n,\Phi^{\otimes n})+\alpha(n).
$$
Thus rearranging l.h.s. and r.h.s. terms
$$
I_c(\rho_n, \Phi^{\otimes n})\leq IG_c(\rho_n, \Phi^{\otimes n})+\alpha(n).
$$
Finally, taking the limit on both sides of inequality, having in mind that
$\lim_{n\to\infty}\frac{\alpha(n)}{n}=0$, gives the desired result.
\end{proof}


\section{Algorithmic Quantum Capacity}\label{sec:aqc}

 Here we show that the standard quantum capacity can be approximated by a quantity linear in the channel's input. 

\begin{thm}\label{th:approx}
Let $\Phi:\B(\h_A)\to\B(\h_B)$ be a semi-computable quantum channel. Then its quantum capacity $Q_c(\Phi)$ is a recursively approximable number.
\end{thm}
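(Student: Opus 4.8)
The plan is to show $Q_c(\Phi)$ is recursively approximable by exhibiting a computable sequence of rationals converging to it. By Theorem \ref{quantumalgorithmiccapacity} we have $Q_c(\Phi)=QG_c(\Phi)$, so it suffices to work with the algorithmic coherent information rate. The decisive structural advantage here is \emph{linearity}: the Gacs complexity $G(\rho)=-{\rm Tr}(\rho\log\hat\mu)$ is linear in $\rho$, and consequently $IG_c(\rho,\Phi)=G(\Phi(\rho))-G(\rho,\Phi)$ is linear in the input state. This is precisely the property the paper emphasizes that the standard entropies lack, and I expect it to be the engine of the whole argument. Because of linearity, the maximum $\max_{\rho_n} IG_c(\rho_n,\Phi^{\otimes n})$ of a linear functional over the (convex) set of semi-computable density matrices is attained at an extreme point, so the optimization reduces to a search over a tractable, effectively enumerable family of candidate inputs rather than over the full continuum of states.

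The key steps, in order, would be as follows. First, fix $n$ and analyze $f(n):=\frac{1}{n}\max_{\rho_n} IG_c(\rho_n,\Phi^{\otimes n})$. Using linearity of $G$ in its argument, rewrite $IG_c(\rho_n,\Phi^{\otimes n})$ as a single linear functional of $\rho_n$ whose coefficients are determined by $\log\hat\mu^n_B$ and $\log\hat\mu^n_{RB}$ together with the (semi-computable) channel action $\Phi^{\otimes n}$ and its complement $\tilde\Phi^{\otimes n}$. Second, I would argue that this linear functional can be evaluated, to any prescribed rational accuracy, from the semi-computable data defining $\hat\mu$ and $\Phi$: since $\hat\mu$ is the limit of a computable quasi-increasing sequence of elementary semi-density matrices $T_m$ and the channel is semi-computable, the traces ${\rm Tr}(\Phi^{\otimes n}(\rho_n)\log T_m)$ are computable and converge, yielding a recursive approximation of the objective for each fixed extreme-point candidate $\rho_n$. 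Third, because the optimization over $\rho_n$ is a maximization of a linear function over a convex effectively-presented domain, the maximum $f(n)$ is itself recursively approximable. Fourth, I would control the $n\to\infty$ limit: the sequence $\frac1n\max IG_c$ converges (its limit is $QG_c(\Phi)=Q_c(\Phi)$), and I would use a Fekete-type superadditivity/subadditivity argument on $n\mapsto \max IG_c(\cdot,\Phi^{\otimes n})$ — together with an explicit error term analogous to $\alpha(n)$ with $\alpha(n)/n\to 0$ from Theorem \ref{relativequ} — to turn the convergence into an effective one, so that a diagonal selection $n=n(k)$ produces a single computable rational sequence converging to $Q_c(\Phi)$.

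I expect the main obstacle to be the \emph{effectivity of the optimization and of the limit}, not the linearity itself. Specifically, evaluating $\log\hat\mu^n_B$ is delicate because $\hat\mu$ is only lower semi-computable, so $-\log\hat\mu$ is only \emph{upper} semi-computable; one obtains monotone approximations from one side but must quantify the approximation error carefully to recover two-sided control and hence genuine recursive approximability (a one-sided semi-computable estimate alone would not suffice). The second difficulty is uniformity in the limit: recursive approximability of each $f(n)$ does not automatically yield recursive approximability of $\lim_n f(n)$ unless the rate of convergence is itself computably bounded. The crucial point will therefore be to extract, from the superadditive/subadditive structure of the regularized quantity and the explicit sublinear correction $\alpha(n)$, an \emph{a priori computable} modulus of convergence; once such a modulus is in hand, the diagonalization is routine and the proof closes.
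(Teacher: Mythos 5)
Your proposal follows essentially the same route as the paper's proof: invoke Theorem \ref{quantumalgorithmiccapacity} to replace $Q_c(\Phi)$ by $QG_c(\Phi)$, exploit the linearity of $IG_c(\cdot,\Phi^{\otimes n})$ to collapse the maximization over states into a finite-dimensional linear-algebra problem, and approximate the coefficients of that linear functional from the semi-computable data defining $\hat{\mu}$. Your extreme-point reduction is exactly what the paper makes concrete: it expands $IG_c(\rho_n,\Phi^{\otimes n})=\sum \lambda_{i_1\ldots i_n\, j_1\ldots j_n}\, t_{i_1\ldots i_n\, j_1\ldots j_n}$, views the $t$'s as a coefficient matrix $T$, and observes (via Cauchy--Schwarz and the variational characterization) that the maximum over density matrices is the largest eigenvalue of $|T|$, attained at the corresponding pure eigenstate; since each $t$-entry is a limit of computable numbers coming from the quasi-increasing elementary approximants of $\hat{\mu}_{B^n}$ and $\hat{\mu}_{E^n}$, that eigenvalue is recursively approximable for each $n$.

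Two points of divergence are worth recording. First, your claim that ``a one-sided semi-computable estimate alone would not suffice'' misreads the paper's definition: recursively approximable means only that the number is the limit of \emph{some} computable rational sequence, with no modulus of convergence required, so every lower or upper semi-computable real is automatically recursively approximable and the monotone one-sided approximations of $-\log\hat{\mu}$ need no upgrading to two-sided control at fixed $n$ (note, though, that $IG_c$ is a \emph{difference} of two oppositely-signed semi-computable terms, so it is merely limit-computable, without one-sided structure). Second, your worry about the outer limit $n\to\infty$ is legitimate and is precisely the step the paper glosses over: a limit of uniformly limit-computable numbers need not be limit-computable, yet the paper simply asserts that ``we can find a computable sequence of eigenvalues whose corresponding rate converges to the quantum capacity.'' Your Fekete-type plan is the right kind of patch but is left unexecuted in your write-up; observe that if you run it on the \emph{standard} coherent information rather than the algorithmic one --- for a computable channel each level-$n$ maximum $a_n=\max_{\rho_n} I_c(\rho_n,\Phi^{\otimes n})$ is a computable real, and superadditivity gives $Q_c(\Phi)=\sup_n a_n/n$, a supremum of a computable sequence, hence lower semi-computable and so recursively approximable --- you would obtain a complete and arguably cleaner proof of the theorem than either your sketch or the paper's argument provides.
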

\begin{proof}
Let $\rho_n\in\mathbb{S}(\mathbb{H}^{\otimes n})$ be a arbitrary density matrix with the following decomposition
$$
\rho_n=\sum_{i_1 \ldots i_n \,\, j_1 \ldots j_n} \lambda_{i_1 \ldots i_n \,\, j_1 \ldots j_n} |i_1 \ldots i_n><j_1 \ldots j_n| .
$$
Let also $V_n$ be the Stinespring dilation of $\Phi^{\otimes n}$ \cite{Stine}.
Since $\rho_n$ and $V_n$  are linear we have
\begin{eqnarray*}
IG_c(\rho_n) &=& -{\rm Tr}\left(\Phi^{\otimes n}(\rho_n) \log\hat{\mu}_{B^n}\right)
+{\rm Tr}\left(\tilde{\Phi}^{\otimes n}(\rho_n) \log\hat{\mu}_{E^n}\right) \\
&=&
\sum_{i_1 \ldots i_n \,\, j_1 \ldots j_n} \lambda_{i_1 \ldots i_n \,\, j_1 \ldots j_n} t_{i_1 \ldots i_n \,\, j_1 \ldots j_n},\\
\end{eqnarray*}
where we defined
\begin{eqnarray*}
t_{i_1 \ldots i_n \,\, j_1 \ldots j_n}&:=&-{\rm Tr}(\Phi^{\otimes n}(|i_1 \ldots i_n><j_1 \ldots j_n|) \log\hat{\mu}_{B^n})\\
&+&{\rm Tr}(\tilde{\Phi}^{\otimes n}( |i_1 \ldots i_n><j_1 \ldots j_n|) \log\hat{\mu}_{E^n}).
\end{eqnarray*}
These can be intended as entries of a matrix $T$.
If we now write $\rho$ and $T$ as vectors $v_\rho$ and $v_T$, respectively, then it follows
$$
|IG_c(\rho_n)|^2=|\langle v_\rho|v_T\rangle|^2\leq || |v_\rho\rangle|| \,\, || |v_T\rangle|| \leq || |v_T\rangle||.
$$
Next, let us consider $\lambda$ as the largest eigenvalue of $|T|$ with eigenvector $\ket{\Lambda}$ and define
$\rho=|\Lambda\rangle\langle\Lambda|$. It is clear that $||\rho||=1$ and hence
$$
\max_\rho |\langle v|w\rangle|=|| |v_T\rangle||=|\lambda|.
$$
Since $\hat{\mu}_{B^n}$ and $\hat{\mu}_{E^n}$ are semi-computable semi-density matrix, then there exist quasi-increasing computable sequences of elementary matrices that convergence to them. In turn it is know that the entries of elementary matrices are computable numbers, hence $t_{i_1 \ldots i_n \,\, j_1 \ldots j_n}$'s are recursively approximable numbers. This means that we can find a computable sequence of eigenvalues whose corresponding rate converges to the quantum capacity.
Therefore, the maximum of $IG_c(\rho_n)$ can be derived in an algorithmic way and hence, invoking the results of Theorem \ref{quantumalgorithmiccapacity}, we may conclude that the quantum channel capacity is approximated by this algorithmic method.
\end{proof}

\bigskip

 It is worth remarking that by means of Theorem \ref{th:approx} we remove the maximization of algorithmic coherent information at each level $n$ in proving the recursive approximability of $QG_c(\Phi)$.
 Furthermore $QG_c(\Phi)$, hence $Q_c(\Phi)$, results approximable by a quantity ($IG_c(\Phi)$) that is linear in the channel's input.

Although Theorem \ref{th:approx} sheds light on computational aspects of quantum capacity, it has the drawback of resorting to a universal semi-computable density matrix which is not computable.

 We henceforth show a case where the algorithmic coherent information can be exactly (with any desired degree of precision) computed and then quantum channel capacity is recursively approximated.

\bigskip

Quite generally, to find the quantum channel capacity by means of a computer program, we need a finite set of computable density matrices at each level $n$, the dimension of the Hilbert space.
We assume that the cardinality of such set obeys the condition $\lim_{n\to\infty}\log[f (n)/n]=0$,
with $f(n)$ a computable function denoting the number of chosen density matrices at level $n$. For example, computable polynomial functions have this property.
Then, the set can be represented as follows
\begin{eqnarray*}
{\cal S}&:=&\underbrace{\rho_{1},\ldots, \rho_{f(1)}}_{f(1)}, \;
\underbrace{\rho_{1+f(1)}, \rho_{2+f(1)},\ldots, \rho_{f(2)+f(1)}}_{f(2)}, \; \ldots \\
&&\ldots, \; \underbrace{\rho_{1+f(n-1)+\ldots+f(2)+f(1)},\ldots,  \rho_{f(n)+f(n-1)+\ldots+f(2)+f(1)}}_{f(n)}, \; \ldots
\end{eqnarray*}
Define
$$
\tilde {\rho}=\sum_{i=1} \delta(i) \rho_i,
$$
with $\delta(i)$ as in Eq.\ref{deltan}.
We emphasize that $\tilde {\rho}$ is not universal semi-computable likewise $\hat{\mu}$.

\begin {thm}\label{th:QonS}
Let $\Phi:\B(\h_A)\to\B(\h_B)$ be a computable quantum channel.
The quantum capacity $Q_c(\Phi)$ restricted to the set $\cal S$ results:
$$
Q_c \left(\Phi\right) = \lim_{n\to\infty}\frac{1}{n}\max_{\rho_{i_n}\in{\cal S}} \left\{-{\rm Tr}\left[
\Phi^{\otimes n}(\rho_{i_n})\log\Phi^{\otimes n}(\tilde{\rho})\right]
+{\rm Tr}\left[ \tilde {\Phi}^{\otimes n}(\rho_{i_n})\log\tilde {\Phi}^{\otimes n}(\tilde{\rho})\right]\right\}.
$$
\end {thm}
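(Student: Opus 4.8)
The plan is to compare the functional inside the braces, which I denote $J(\rho_n)$, directly with the ordinary coherent information $I_c(\rho_n,\Phi^{\otimes n})=S(\Phi^{\otimes n}(\rho_n))-S(\tilde\Phi^{\otimes n}(\rho_n))$. Note that $J$ is exactly the quantity $IG_c$ of the proof of Theorem~\ref{th:approx}, but with the universal semi-density matrices $\hat\mu_{B^n},\hat\mu_{E^n}$ replaced by the \emph{computable} states $\Phi^{\otimes n}(\tilde\rho)$ and $\tilde\Phi^{\otimes n}(\tilde\rho)$; this is what renders the right-hand side effectively computable and linear in $\rho_n$. Writing out the relative entropy \eqref{eq:relent} and adding and subtracting the output and exchange self-entropies of $\rho_n$, a direct manipulation yields
\begin{equation*}
J(\rho_n)-I_c(\rho_n,\Phi^{\otimes n})
=S\!\left(\Phi^{\otimes n}(\rho_n),\Phi^{\otimes n}(\tilde\rho)\right)
-S\!\left(\tilde\Phi^{\otimes n}(\rho_n),\tilde\Phi^{\otimes n}(\tilde\rho)\right).
\end{equation*}
Both relative entropies are non-negative and, being evaluated through the CPTP maps $\Phi^{\otimes n}$ and $\tilde\Phi^{\otimes n}$, each is bounded above by $S(\rho_n,\tilde\rho)$ via the monotonicity relation \eqref{relative}. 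Hence $|J(\rho_n)-I_c(\rho_n,\Phi^{\otimes n})|\le S(\rho_n,\tilde\rho)$.

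Next I would bound $S(\rho_{i_n},\tilde\rho)$ uniformly over the level-$n$ block of $\cal S$. Because $\tilde\rho=\sum_i\delta(i)\rho_i$ contains each $\rho_{i_n}$ with positive weight, the domination $\delta(i_n)\rho_{i_n}\le\tilde\rho$ holds (in the quasi-order of the paper, after the canonical projection $P_n$). Operator monotonicity of the logarithm then gives $-\log\tilde\rho\le-\log\delta(i_n)-\log\rho_{i_n}$, and tracing against $\rho_{i_n}$ makes the two self-entropy contributions cancel, leaving
\begin{equation*}
S(\rho_{i_n},\tilde\rho)\le-\log\delta(i_n)=\log i_n+2\log\log i_n,
\end{equation*}
with $\delta$ as in \eqref{deltan}. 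This mirrors the domination estimate used with $\hat\mu$ in Theorems~\ref{relativequ} and \ref{quantumalgorithmiccapacity}, except that the penalty now grows with the enumeration index.

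The decisive step is then to control that index. Any $\rho_{i_n}$ in the level-$n$ block carries index at most $I_n:=\sum_{m=1}^{n}f(m)$, so the bound above holds uniformly with $i_n$ replaced by $I_n$. The growth hypothesis on the cardinality function $f$ forces $\log I_n=o(n)$ (for polynomial $f$, $I_n$ is polynomial and $\log I_n=O(\log n)$), whence
\begin{equation*}
\frac1n\max_{\rho_{i_n}\in{\cal S}}S(\rho_{i_n},\tilde\rho)\le\frac1n\bigl(\log I_n+2\log\log I_n\bigr)\xrightarrow[n\to\infty]{}0.
\end{equation*}
Combining this with the pointwise bound of the first paragraph and the elementary fact that two functions differing pointwise by at most a quantity $\varepsilon_n$ (with $\varepsilon_n/n\to0$) have maxima differing by at most $\varepsilon_n$, I conclude that $\frac1n\max_{\cal S}J$ and $\frac1n\max_{\cal S}I_c$ share the same limit. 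The latter is by definition the coherent-information rate of $\Phi$ optimized over $\cal S$, i.e. $Q_c(\Phi)$ restricted to $\cal S$, and Theorem~\ref{quantumalgorithmiccapacity} ties this back to the algorithmic rate.

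I expect the main obstacle to be precisely this uniform index control: the penalty $-\log\delta(i_n)$ increases with the position of $\rho_{i_n}$ in the enumeration, and only a sub-exponential bound on $f$ keeps $\frac1n\log I_n\to0$; the hypothesis on $f$ is exactly what is needed, and it is tight in the sense that a genuinely exponential $f$ would leave a non-vanishing gap between the two rates. A secondary technical point, to be handled carefully, is that the $\rho_{i_n}$ live on $\h^{\otimes n}$ while $\tilde\rho$ lives on $\h$, so the operator inequalities and logarithms must be read through the embedding $\h^{\otimes n}\subset\h$ and the projectors $P_n$; support containment, needed for all relative entropies and logarithms to be finite, is automatic because $\tilde\rho$ dominates $\rho_{i_n}$ and the completely positive maps preserve this domination.
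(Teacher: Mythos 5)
Your proposal is correct and takes essentially the same route as the paper: both rest on the domination $\delta(i_n)\rho_{i_n}\le\tilde\rho$, the resulting penalty $-\log\delta(i_n)$ that sandwiches the bracketed functional around $I_c(\rho_{i_n},\Phi^{\otimes n})$, and the index bound $i_n\le f(1)+\cdots+f(n)\le nf(n)$, which makes $\frac{1}{n}\log\delta(i_n)\to 0$ under the growth hypothesis on $f$. The only difference is organizational: you pull the two output relative entropies back to the input via the monotonicity \eqref{relative} and bound $S(\rho_{i_n},\tilde\rho)\le-\log\delta(i_n)$ once, whereas the paper pushes the operator domination forward through $\Phi^{\otimes n}$ and $\tilde{\Phi}^{\otimes n}$ and applies the inequalities \eqref{semi1}--\eqref{semi2} directly at the channel outputs.
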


\begin {proof}
Let us assume that $\{\rho_{i_1}, \rho_{i_2},  \ldots  \}$ is a set of density matrices taken from
${\cal S}$.
By considering the position of each element in this set, one can define the following density matrix
$$
\sigma=\sum_{n=1} \delta (i_n) \rho_{i_n}.
$$

It is trivial to show that $\sigma \leq \tilde{\rho}$. Then, we have
\begin{equation}\label{semi1}
  -{\rm Tr}( \rho_{i_n}\log\tilde{\rho})\leq  -{\rm Tr}( \rho_{i_n}  \log\rho_{i_n})-\log  \delta (i_n)  .
\end{equation}
On the other hand, we have the following relation using the relative entropy
\begin{equation}\label{semi2}
   -{\rm Tr}( \rho_{i_n}  \log\rho_{i_n})\leq  -{\rm Tr}( \rho_{i_n}\log\tilde{\rho}).
\end{equation}

Now, it is
\begin{eqnarray*}
I_c (\rho_{i_n}, \Phi^{\otimes n})&\leq&   -{\rm Tr}\left[ \Phi^{\otimes n}(\rho_{i_n})\log\Phi^{\otimes n}(\tilde{\rho})\right]+{\rm Tr}\left[\tilde {\Phi}^{\otimes n}(\rho_{i_n})\log\tilde {\Phi}^{\otimes n}(\tilde{\rho})\right] \\
&&-\log  \delta(i_n).
\end{eqnarray*}
Therefore, the maximum of coherent information over all density matrices from the set $\cal S$ is less that the maximum of the r.h.s. over $\cal S$.  Taking into account that
$i_n\leq f(1)+f(2)+\dots+f(n)\leq nf(n)$, hence
$$
-\log\delta(i_n)\leq -\log\delta(n f(n))\Rightarrow
\lim_{n\to\infty}\frac {\log  \delta (i_n)}{n}=0,
$$
we get
$$
Q_c (\Phi) \leq  \lim\frac{1}{n}\max_{\rho_{i_n}} \left\{-{\rm Tr}\left[\Phi^{\otimes n}(\rho_{i_n})\log\Phi^{\otimes n}(\tilde{\rho})\right]+{\rm Tr}\left[   \tilde {\Phi}^{\otimes n}(\rho_{i_n})\log\tilde {\Phi}^{\otimes n}(\tilde{\rho})\right]\right\}.
$$
On the other hand, using the relations \ref{semi1} and \ref{semi2}, we have
\begin{eqnarray*}
-{\rm Tr}\left[\Phi^{\otimes n}(\rho_{i_n})\log\Phi^{\otimes n}(\tilde{\rho})\right]
+{\rm Tr}\left[\tilde {\Phi}^{\otimes n}(\rho_{i_n})\log\tilde {\Phi}^{\otimes n}(\tilde{\rho})\right]&& \\
-\log  \delta(i_n)  &\leq& I_c (\rho_{i_n}, \Phi^{\otimes n}),
\end{eqnarray*}
from which follows
$$
\lim\frac{1}{n}\max_{\rho_{i_n}} \left\{-{\rm Tr}\left[\Phi^{\otimes n}(\rho_{i_j})\log\Phi^{\otimes n}(\tilde{\rho})\right]+{\rm Tr}\left[   \tilde {\Phi}^{\otimes n}(\rho_{i_n})\log\tilde {\Phi}^{\otimes n}(\tilde{\rho})\right]\right\}  \leq Q_c (\Phi).
$$
\end {proof}

Notice that the quantity inside the brackets in Theorem \ref{th:QonS} (as well as its maximum) can be computed with arbitrary degree of precision. Then as consequence of Theorem ~\ref{th:approx}  finding the recursively approximation of $Q_c$ restricted to the set  of density matrices $\cal S$
results possible.

Furthermore, from an algorithmic point of view, we can borrow from Theorem \ref{th:QonS} a lower bound on the quantum capacity.


\section{Conclusion}\label{sec:conclu}

We have investigated the quantum channel capacity based on the computability concept. In this process,  the von Neumann entropy is replaced by the Gacs entropy \cite{Gacs} which is defined based on the universal semi-measure. The algorithmic coherent information is rewritten in terms of this entropy which results linear with respect to density matrices.

 Then we have shown that quantum channel capacity restricted on semi-computable density matrices is recursively approximable (Theorem \ref{th:approx}). This constitutes a step forward since the negative claim about computability of quantum channel capacity \cite{wolf,cubitt}.
However, since the algorithmic coherent information cannot be computed for any given number of channel's usage due to un-computability of the universal semi-measure, we have subsequently introduced a restriction to compute it with any degree of precision (Theorem \ref{th:QonS}).

We believe it will be useful for computer programmers to know that quantum channel capacity
can be recursively approximated by removing the the maximization of coherent information at each channel's usage level. As well as the fact that restricting to the set of computable density matrices a lower bound on the quantum capacity can be computed.

It remains open the problem of whether exist or not in the plethora of quantum channel capacities one whose algorithmic version differs from the standard one. It seems natural to next address this issue for the classical capacity of quantum channels which requires an analysis of computability of Holevo $\chi$ quantity \cite{holevo}.

Finally, given that the quantum capacity is related to entanglement distillability, it would be worth extending the pursued algorithmic approach to the subject of entanglement manipulation (distillation and dilution). This is left for future investigations.



\end{document}